\newtheorem{theorem}{Theorem}
\begin{document}

\title{Spatial Field estimation from Samples taken at Unknown Locations generated by an Unknown Autoregressive Process}

\author{Sudeep Salgia and Animesh Kumar}

\maketitle

\begin{abstract}

Sampling physical fields with mobile sensors is an upcoming field of interest.
This offers greater advantages in terms of cost as often just a single sensor
can be used for field sensing, can be employed almost everywhere without sensing
stations, and has nominal operational costs. In a sampling scenario using mobile
sensors, knowing sampling locations more accurately leads to a manifold increase
in the costs. Moreover, the inertia of the moving vehicle constrains the
independence between the intersample distances, making them correlated. This
work, thus, aims at estimating spatially bandlimited fields from samples,
corrupted with measurement noise, collected on sampling locations obtained from
an autoregressive model on the intersample distances. The autoregressive model
is used to capture the correlation between the intersample distances. In this
setup of sampling at unknown sampling locations obtained from an autoregressive
model on intersample distances, the mean squared error between the field and its
estimated version has been shown to decrease as O($1/n$), where $n$ is the
average number of samples taken by the moving sensor.

\end{abstract}

\begin{IEEEkeywords}
Additive white noise, autoregressive model, nonuniform sampling, signal
reconstruction, signal sampling, wireless sensor networks.
\end{IEEEkeywords}

\section{Introduction}

Consider the scenario of sampling a spatially smooth field by using a sensor
mounted on a vehicle moving along a specific path \cite{Vett_samp_orig1,
Vett_samp_orig2}. This problem has been addressed in literature in various
scenarios and constraints. This problem reduces to the classical sampling and
interpolation problem (as described in \cite{Vett_samp_orig1, Vett_samp_orig2,
samp_interp_1, samp_interp_2, samp_interp_3} ), if the samples are collected on
{\em precisely known} locations in absence of any measurement noise. A more
generic version with precisely known locations, in presence of noise, both
measurement and quantization, has also been addressed (Refer
\cite{known_loc_1,known_loc_2, known_loc_3,known_loc_4, known_loc_5,
known_loc_6}). However, in practical scenarios, it is often difficult or
expensive to obtain the precise locations of the samples. Motivated on similar
lines, the problem of estimating fields from unknown locations has also been
studied \cite{unknown_loc}. It has been shown that the mean squared error in
estimating a spatially bandlimited field from measurement-noise affected field
samples that are collected on unknown spatial locations obtained from an unknown
renewal process decreases as $O\left(1/n\right)$ where $n$ is the average number
of samples. This work looks into a more realistic extension of the work in
\cite{unknown_loc}. 

The work \cite{unknown_loc} considers the intersample distances to be coming
from an unknown renewal process and hence are independently identically
distributed (i.i.d) random variables. This may not be an accurate model,
especially if the velocity of the vehicle considered to be smooth over the
mobile sampling period. This can be attributed to the large average sampling
densities used in such setups to decrease the error. A practical sampling setup
agnostic to location information often takes samples at reasonably regular time
intervals. Since velocity is likely to be smooth over a number of samples, we
expect that if an intersample distance was large, the next intersample distance
is also likely to be large, implying correlation between intersample distances.
A realistic model of intersample distances in mobile sampling should account for
the correlation between them. The primary motivation behind this work is to look
into the estimation of a field in a more realistic scenario to bolster the
practical viability of mobile sensing.

The field model is similar to the previous work wherein a spatially smooth,
temporally fixed, finite support field in a single dimension is assumed. The
single dimension has been assumed for mathematical tractability. A better two or
three dimensional model can be worked on out similar lines and has been left for
future work. The field has been assumed to be bandlimited to ensure spatial
smoothness. The sampling has been modelled using an autoregressive model on the
intersample distances, i.e. intersample distance at an instant is linearly
dependent on its previous intersample distances plus a stochastic
term\cite{percival1993spectral}. An autoregressive model of order $p$ would
imply that the intersample distance at an instant would be linearly dependent on
its `$p$' previous intersample distances. A model of order $1$ not only provides
a sufficiently generic model of mobile sensing but also ensures mathematical
tractability for a first exposition. Therefore, 
\begin{align}\label{define_AR_model}
X_i = \rho X_{i - 1} + Y_i
\end{align}
where $X_i$ is the $i^{\text{th}}$ intersample distance and $Y_i$ is the
stochastic term. $Y_i$'s are taken to be i.i.d variables coming from an unknown
renewal process, that is, the underlying distribution is not known. The
coefficient $\rho < 1$, is a positive constant assumed to be known. Typically,
it is suffices to know the estimate of this value.\ The $\rho$ models an element
of smoothness of velocity, while the renewal process is modelling the variability
due to various physical factors. The renewal process is assumed to have a finite
support and that governs the extent of variability. As $\rho$ decreases and
variance of the underlying distribution of the unknown renewal process
essentially boils down to the case considered in \cite{unknown_loc}. Also the
samples are assumed to be corrupted with additive independent noise. The only
statistics known about the noise are that the noise has a zero mean and a finite
variance. Again, oversampling, that is, a large average sampling density, will
be the key to reduce the mean square error between the original field and the
estimated field.
\begin{figure}[!htb]
\centering
\includegraphics[width =0.6\textwidth]{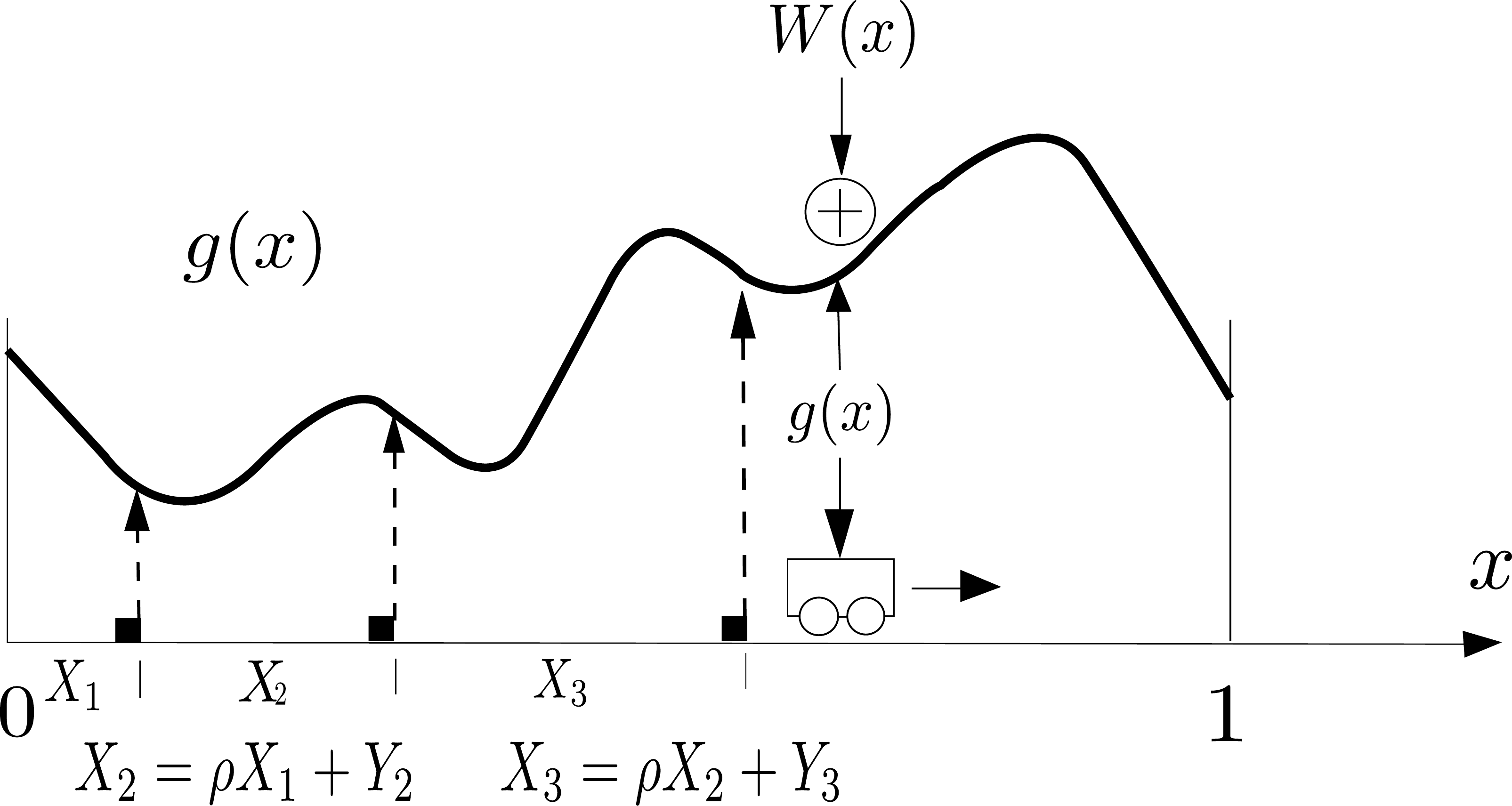}
\caption{\label{Fig:correlationfield} The mobile sampling scenario under study
is illustrated. A mobile sensor, where the field is temporally fixed, collects
the spatial field's values at unknown locations such that the intersample
distances are unknown and denoted by $X_1, X_2, \dots$. The figure shows that
how the intersample distances are modelled using an autoregressive model. Note
that how an initial large intersample distance leads to larger intersample
distances in future. It is also assumed that the samples are affected by
additive and independent noise process $W(x)$. Our task is to estimate $g(x)$
from the readings $g(s_1)+W(s_1), \ldots ,g(s_m)+W(s_m)$.}
\end{figure} 
The main result shown in the paper is that for sampling setup as described
above, the mean square error between the actual field and the estimated field
decreases as $ O\left(\frac{1}{n}.(1 - \rho^n) \right) \sim O(1/n)$ where,
$\rho$ is the correlation coefficient described above and $n$ is the average
sampling density. Note that the asymptotic bound $O(1/n)$, but the first term is
just used to show the effect of $\rho$ in the result. The result holds only for
beyond a certain sampling density threshold, the value of which is governed by
$\rho$ and is an increasing function of $\rho$. For reasonable values of $\rho$,
say $\rho = 0.9$, and $\lambda = 2$ this threshold is around $400$, which is a
small value in mobile sensing setups. Here, $\lambda$ is a parameter governing
the support of the unknown renewal process.

{\em Prior art}: Sampling and reconstruction of discrete-time bandlimited
signals from samples taken at unknown locations was first studied by Marziliano
and Vetterli \cite{prior_art_1} who had addressed the problem in a discrete-time
setup. Browning\cite{prior_art_2} later proposed an algorithm to recover
bandlimited signals from a finite number of ordered nonuniform samples at
unknown sampling locations. Nordio~et~al.~\cite{prior_art_3} studied the
estimation of periodic bandlimited signals, where sampling at unknown locations
is modelled by a random perturbation of equi-spaced deterministic grid. More
generally, the topic of sampling with jitter on the sampling locations
\cite{samp_interp_1}, \cite{prior_art_4},  is well known in the literature.
Mallick and Kumar \cite{prior_art_5} worked on reconstruction of bandlimited
fields from location-unaware sensors restricted on a discrete grid.  A more
generic case of sampling from unknown locations coming from a known underlying
distribution was introduced recently\cite{prior_art_6}. Further, the work
\cite{unknown_loc}, deals with estimation of field from unknown sampling
locations coming from an unknown renewal process. This work is different from
all others in the sense that the sampling model incorporates the correlation
between the intersample distances and thus addresses a more practical scenario.
The intersample distances are {\em unknown} have been considered to be coming
from an autoregressive model of order $1$ whose stochastic part is an {\em
unknown} renewal process.

{\em Notation}: All spatial fields under consideration which satisfy the set of
given assumptions are denoted by $g(x)$ and their corresponding spatial
derivatives will be denoted by $g'(x)$. $n$ denotes the average sampling
density, while $M$ is the random variable which denotes the number of samples
taken over the support of the field. The $\mathcal{L}^{\infty}$ norm of a
function will be denoted by $||g||_{\infty}$. The expectation operator will be
denoted by $\mathbb{E}$. The expectation is over all the random variables within
the arguments. The set of integers, reals and complex numbers will be denoted by
$\mathbb{Z}, \mathbb{R}$, and $\mathbb{C}$ respectively. Also, $j = \sqrt{-1}$.

{\em Organization}: The model of the spatial field, the distortion metric, the
sampling model explained in detail, elucidating the autoregressive model and the
measurement noise model will be discussed in Section II. The estimation of the
field from the samples has been elaborated in Section III. The Section~IV has the
simulations and Section V has conclusions and insights.

\section{Field Model, Distortion Metric, Sampling Model and Measurement Noise
Model}

The section describes the models which have been used for analysis in this work.
Firstly the field model is described followed by the distortion metric, the
sampling model and finally the measurement noise model.

\subsection{Field Model}

This work will assume that the field that is being sampled is one dimensional
spatially bandlimited signal and is temporally fixed. The bandlimitedness in
space ensures smoothness of the field of interest. The spatial dimension is $x
\in \mathbb{R}$ and the field is denoted by $g(x)$. The static nature in
temporal domain has been justified in \cite{Vett_samp_orig1, unknown_loc}. It is
a suitable assumption when the temporal variation in the field is far slower
than the speed of the mobile sensor. Furthermore, it helps make the analysis a
bit more tractable. The field is bounded and without loss of generality, $|g(x)|
\leq 1$. The function $g(x)$ has a Fourier series expansion given by
\begin{align}\label{def_basic_gx} g(x) = \sum_{k = -b}^b a[k] \exp(j 2\pi x)
\quad ; \quad a[k] = \int_{0}^{1} g(x) \exp(-j2 \pi x) dx \end{align} where $b$
is a known positive integer. Also, \begin{align}\label{bernestien} |g'(x)| \leq
2b\pi ||g||_{\infty} \leq 2b\pi \end{align} directly follows from the
Bernstein's inequality~\cite{bernstein}.

\subsection{Distortion Metric}

A simple and intuitive measure of the distortion will be used, the mean squared
error between the true field and its estimate. This will denote the energy of
the difference of the actual and the estimated signal. If $\hat{G}(x)$ is an
estimate of the field, then the distortion, $\mathcal{D}(\widehat{G}, g)$ is
defined as \begin{align} \mathcal{D}(\widehat{G}, g) := \mathbb{E} \left[
\int_0^1 |\widehat{G}(x) - g(x)|^2 dx \right] = \sum_{k = -b}^b \mathbb{E}
\left[|\hat{A}[k] - a[k]|^2 \right] \end{align} where, $\displaystyle \hat{A}[k]
= \int_{0}^{1} \widehat{G}(x) \exp(-j2 \pi x) dx $

\subsection{Sampling Model}

Let $ \displaystyle \{ X_i \}_{ i = 1}^M$ be the set of intersample distances
where $X_i$ is the distance between $i^{\text{th}}$ and $(i-1)^{\text{th}}$
sample and $X_1$ is the distance of the first sample from $x = 0$. The
intersample distances have been modelled as an autoregressive process of order
$1$ using a parameter $\rho$, $\displaystyle X_i = \rho X_{i-1} + Y_i \ \forall
\ i \geq 2$ and $X_1 = Y_1$. The $\rho$ models the dependence of the current
intersample distance on the previous one and $Y_i$ corresponds to the stochastic
term. $Y_i$'s are realised using an unknown renewal process. That is to say,
$Y_i$'s are independent and identically distributed random variables realized
from an {\em unknown} common distribution $Y$, such that $Y > 0$.  Using these
intersample distances, the sampling locations, $S_n$ are given by $\displaystyle
S_n = \sum_{i =1}^n X_i$. The sampling is done over an interval $[0,1]$ and $M$
is the random number of samples that lie in the interval i.e. it is defined such
that, $S_M \leq 1$ and $S_{M+1} > 1$. Thus $M$ is a well defined measurable
random variable~\cite{meas_var}.

For the purpose of ease of analysis and tractability, the support of the
distribution of $Y$ is considered to be finite and inversely proportional to the
sampling density. Hence, it is assumed that
\begin{align}
\label{distribution_bounds}
0 < Y \leq \frac{\lambda}{n} \text{  and  } \displaystyle \mathbb{E}[Y] =
\frac{1}{n},
\end{align}
where $\lambda > 1$ is a parameter that characterizes the support of the
distribution. It is a finite number and is independent of $n$. This would be a
crucial factor that governs the constant of proportionality in the expected mean
squared error in the estimate of the field. Furthermore, it is also an important
factor that determines the threshold on the minimum number of samples. Note
that, 
\begin{align}
X_1 = Y_1 ; \ \ X_2 = \rho X_1 + Y_2 = \rho Y_1 + Y_2 ; \ \ X_3 = \rho X_2 + Y_3
= \rho^2 Y_1 + \rho Y_2 + Y_3
\end{align}
which can be generalized as $\displaystyle X_n = \sum_{r = 1}^n \rho^{n-r} Y_r$.
This can be used to find a closed form expression of $S_n$ which can be written
as $\displaystyle S_n = \sum_{i = 1}^n X_i = \sum_{i = 1}^n \sum_{r = 1}^i
\rho^{i-r} Y_r$. Therefore, 
\begin{align}\label{S_n_simplified}
S_n = \sum_{i = 1}^n \sum_{r = 1}^i \rho^{i-r} Y_r = \sum_{i = 1}^n \sum_{r =
0}^{n - i} \rho^{r} Y_i = \dfrac{1}{1-\rho} \sum_{i = 1}^n (1 - \rho^{n-i+1})
Y_i = \dfrac{1}{1-\rho} \sum_{i = 1}^n c_{i,n} Y_i
\end{align}
where, $ c_{i,n} = 1 - \rho^{n-i+1}$, for $ i = {1, 2, 3 \dots n} $. For $0 \leq
\rho < 1$,
\begin{align}\label{c_bounds} 
\displaystyle \min_{i} c_{i,n} = 1 - \rho \text{ and } \displaystyle \max_{i}
c_{i,n} \leq 1 \ \text{for all } n.
\end{align}. 
We know that by definition, $S_{M+1} > 1$. Substituting $S_{M+1}$ from
\eqref{S_n_simplified},
\begin{align}
1 < \frac{1}{1- \rho} \sum_{i = 1}^{M+1} c_{i,M+1} Y_i  < \frac{(\max_{k}
c_{k,M+1})}{1- \rho}  \sum_{i = 1}^{M+1} Y_i \leq \frac{1}{1- \rho} \sum_{i =
1}^{M+1} Y_i \leq \frac{\lambda (M+1)}{n(1- \rho)} 
\end{align}
The last step follows from \eqref{distribution_bounds}. Therefore, 
\begin{equation}\label{M_upper}
M > \frac{n (1- \rho)}{\lambda} - 1
\end{equation}
This gives a lower bound on the value of $M$. To find the bounds on the expected
number of samples, $\mathbb{E}[M]$, the following lemma is noted.

{\em Lemma}: For the sampling model described as above with the intersample
distances coming from an autoregressive model, the average number of samples
taken in over the interval obeys the following bounds
\begin{equation*}
n(1 - \rho) - 1 \leq \mathbb{E}[M] \leq n + \frac{\lambda}{1 - \rho} - 1
\end{equation*}
{\em Proof}: For the proof of the lemma, we first need to consider the upper
bound on the value of $\mathbb{E}[S_{M+1}]$,
\begin{align}\label{upper_Sm1}
  \mathbb{E}[S_{M+1}] = \mathbb{E}\left[ \frac{1}{1- \rho} \sum_{i = 1}^{M+1}
c_{i,M+1} Y_i \right] =  \frac{1}{1- \rho}  \mathbb{E}\left[ \sum_{i = 1}^{M+1}
c_{i,M+1} Y_i\right]  \leq  \frac{(\max_{k} c_{k,M+1}) }{1- \rho}
\mathbb{E}\left[ \sum_{i = 1}^{M+1} Y_i\right]
\end{align}
Now using Wald's identity\cite{meas_var}, one can write,  
\begin{align}\label{wald}
\displaystyle \mathbb{E}\left[ \sum_{i = 1}^{M+1} Y_i\right] =
\mathbb{E}[Y]\mathbb{E}[M + 1] = \mathbb{E}[Y](\mathbb{E}[M] + 1)
\end{align} 
It is important to note that Wald's identity is applicable on this expression
and not directly on $S_{M+1}$ because $Y_i$'s are a set of independent and
identically distributed random variables while $X_i$'s are not. Using this with
\eqref{upper_Sm1} and the bounds obtained in \eqref{c_bounds}, we can write,
$\displaystyle \mathbb{E}[S_{M+1}] \leq  \frac{1}{1- \rho}
\left(\frac{1}{n}\right) (\mathbb{E}[M] + 1)$. Since, by definition, $S_{M+1} >
1$, therefore, $\mathbb{E}[S_{M+1}] > 1$. Combine this with the result in
\eqref{upper_Sm1} and \eqref{wald} to get, 
\begin{align}\label{EM_lower}
\mathbb{E}[M]  > n(1 - \rho) - 1
\end{align} 
Similarly we can consider a lower bound on $\mathbb{E}[S_{M+1}]$,
\begin{align}\label{ESM1_lower}
\mathbb{E}[S_{M+1}] \geq  \frac{1}{1- \rho} (\min_{k} c_{k,M+1})
\mathbb{E}\left[ \sum_{i = 1}^{M+1} Y_i\right] = \frac{1- \rho}{1- \rho}
\mathbb{E}[Y]. (\mathbb{E}[M] + 1) = \frac{1}{n} (\mathbb{E}[M] + 1)
\end{align}
Thus to upper bound the value of $\mathbb{E}[M]$, the following equation is
considered,
\begin{align*}
S_{M+1} - S_M  & = \dfrac{1}{1-\rho} \sum_{i = 1}^{M+1} (1 - \rho^{M-i+2}) Y_i -
\dfrac{1}{1-\rho} \sum_{i = 1}^M (1 - \rho^{M-i+1}) Y_i \\
& = \dfrac{1}{1-\rho} \sum_{i = 1}^{M+1} (\rho^{M-i+1} - \rho^{M-i}) Y_i =
\sum_{i = 1}^{M+1} \rho^{M-i+1} Y_i \\
\end{align*}
Since, $S_M \leq 1$ and $Y \leq \dfrac{\lambda}{n}$, we can write, 
\begin{align}
S_{M+1} = S_M + \sum_{i = 1}^{M+1} \rho^{M-i+1} Y_i  \leq 1 + \sum_{i = 1}^{M+1}
\rho^{M-i+1} \frac{\lambda}{n} \leq 1 + \frac{\lambda}{n} \sum_{i = 0}^{\infty}
\rho^{i} = 1 + \frac{\lambda}{n(1-\rho)} 
\end{align}
This implies, 
\begin{equation}\label{ESM1_upper}
\mathbb{E}[S_{M+1}] \leq 1 + \frac{\lambda}{n(1-\rho)}
\end{equation}
Combine equation \eqref{ESM1_upper} with equations \eqref{ESM1_lower} and
\eqref{EM_lower} to obtain, 
\begin{align}\label{EM_bounds}
 n(1 - \rho) - 1 \leq \mathbb{E}[M] \leq n + \frac{\lambda}{1 - \rho} - 1
\end{align}
This completes the proof of the lemma.  Since the expected number of samples is
of the order of $n$, therefore, $n$ is termed the sampling density. However, the
results are governed by the {\em effective} sampling density which is given by
$n(1 - \rho)$. The difference becomes relevant at values of $\rho$ close to $1$
and finitely large $n$. More detailed explanation about this has been given in
Section III, in the light of obtained results.

\subsection{Measurement Noise model}
It will be assumed that the obtained samples have been corrupted by additive
noise that is independent both of the samples and of the renewal process. Thus
the samples obtained would be sampled versions of $g(x) + W(x)$, where $W(x)$ is
the noise. Also, since the measurement noise is independent, that is for any set
of measurements at distinct points $s_1, s_2, s_3, \dots s_n$, the samples
$W(s_1), W(s_2), W(s_3), \dots, W(s_n)$ would be independent and identically
distributed random variables. Thus the sampled version of the measurement noise
has been assumed to be a discrete-time white noise process. It is essential to
note that the distribution of the noise is also {\em unknown}. The only
statistics known about that noise in addition to the above are that the noise is
zero mean and has a finite variance, $\sigma^2$.

\section{Field Estimation from the Obtained Samples}

The primary idea in the reconstruction of the field would be the estimation of
the Fourier coefficients of the field using the noisy samples that have been
obtained at unknown sampling locations where the intersample distances have been
modelled using an autoregressive model. The approach is similar
to\cite{unknown_loc}, however, there would be a difference in the analysis
arising because of the correlated intersample distances. A Riemann sum kind of
approximation using the obtained samples will be used to get estimates of the
Fourier coefficients of the signals. Define the estimate of the Fourier
coefficients as 
\begin{align}\label{AK_gen}
  \hat{A}_{\text{gen}} [k] = \frac{1}{M} \sum_{i = 1}^M \big[ g(S_i) + W(S_i)]
\exp\left(-\frac{j 2 \pi ki}{M} \right)
\end{align}
The motivation to use this as the estimate is the same as the one in the
previous paper. The difference is in how this estimate behaves. This is because
in \cite{unknown_loc}, the sample locations were considered to be ``near'' the
grid points as the sampling was on locations obtained from i.i.d.~intersample
distances. Thus, because of independence, each point was individually likely to
be close to the grid points, unaffected by others. However, in this case, the
intersample distances are correlated and hence the error in one will propagate
to all the further ones breaking the premise of these locations being ``near''
to the grid points. Despite this error propagation, it will be shown that the
bound on error is still $O(1/n)$ which is non-trivial. Thus, even though the
work is inspired from the approach taken in \cite{unknown_loc}, the analysis in
the given setup is far more challenging.  With our estimate under this scenario,
the bound on the mean square error is analysed and the following theorem is
noted.
\begin{theorem}
 Define the estimate of the Fourier coefficients $\hat{A}_{\text{gen}} [k]$ as
in \eqref{AK_gen}. Then in the scenario that the intersample distances are
obtained from an autoregressive model, the expected mean squared error between
the estimated and the actual Fourier coefficients is upper bounded as 
\begin{align}
\mathbb{E}\left[ \big|\hat{A}_{\text{gen}}[k] - a[k] \big|^2 \right] \leq
\frac{C - C' \rho^n}{n}, 
 \end{align}
 where $C, C'$ are finite positive constants independent of $n$. The constants
depend on $\lambda$, a finite constant independent of $n$ that characterises the
support of the distribution of $Y$, the bandwidth $b$ of the field, the
`correlation coefficient' $\rho$ and the variance of the measurement noise,
$\sigma^2$.
\end{theorem}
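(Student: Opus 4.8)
The plan is to split the estimation error into a deterministic-grid reconstruction part and a measurement-noise part, show that their cross term vanishes, and bound each separately. Writing $\hat{A}_{\text{gen}}[k] - a[k] = E_{\mathrm{sig}} + E_{\mathrm{noise}}$, where $E_{\mathrm{sig}} = \frac{1}{M}\sum_{i=1}^M g(S_i)\exp(-j2\pi ki/M) - a[k]$ and $E_{\mathrm{noise}} = \frac{1}{M}\sum_{i=1}^M W(S_i)\exp(-j2\pi ki/M)$, I would first condition on the sampling locations $\{S_i\}$ (equivalently on $M$ and on $\{Y_i\}$). Since the noise samples $W(S_i)$ are i.i.d., zero-mean, of variance $\sigma^2$, and independent of the locations, the conditional mean of $E_{\mathrm{noise}}$ is zero; hence $\mathbb{E}[E_{\mathrm{sig}}\overline{E_{\mathrm{noise}}}]=0$ and the mean squared error separates as $\mathbb{E}[|E_{\mathrm{sig}}|^2] + \mathbb{E}[|E_{\mathrm{noise}}|^2]$.

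For the noise part, conditioning on the locations and using pairwise independence removes all off-diagonal terms, leaving $\mathbb{E}[|E_{\mathrm{noise}}|^2 \mid M] = \sigma^2/M$. The deterministic lower bound \eqref{M_upper}, namely $M > n(1-\rho)/\lambda - 1$, then gives $\mathbb{E}[|E_{\mathrm{noise}}|^2] = \sigma^2\,\mathbb{E}[1/M] \le \sigma^2/(n(1-\rho)/\lambda - 1)$, which is $O(1/n)$ and feeds into the constant $C$. The signal part is where the bandlimited structure and the autoregressive correlation both enter. The observation I would use is that for a field of bandwidth $b$, whenever $M > 2b$ the discrete exponentials $\{\exp(-j2\pi ki/M)\}_{i=1}^M$ are orthogonal, so the grid-DFT reconstructs the coefficients exactly: $a[k] = \frac{1}{M}\sum_{i=1}^M g(i/M)\exp(-j2\pi ki/M)$. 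The condition $M > 2b$ holds deterministically via \eqref{M_upper} once $n$ exceeds the claimed threshold. Subtracting, $E_{\mathrm{sig}} = \frac{1}{M}\sum_{i=1}^M [g(S_i) - g(i/M)]\exp(-j2\pi ki/M)$, and Bernstein's inequality \eqref{bernestien} with Cauchy--Schwarz yields $|E_{\mathrm{sig}}|^2 \le \frac{(2\pi b)^2}{M}\sum_{i=1}^M (S_i - i/M)^2$. The problem thus reduces to bounding $\mathbb{E}\big[\frac{1}{M}\sum_{i=1}^M (S_i - i/M)^2\big]$.

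This last reduction is the main obstacle. Here I would decompose $S_i - i/M = (S_i - \tfrac{i}{M}S_M) + \tfrac{i}{M}(S_M - 1)$. The second piece is controlled by $|S_M - 1| \le X_{M+1} = \sum_{r=1}^{M+1}\rho^{M+1-r}Y_r \le \lambda/(n(1-\rho))$, contributing only $O(1/n^2)$. The first piece measures how far the correlated sample positions stray from uniform; using the closed form \eqref{S_n_simplified} and the independence of the $Y_r$, its mean square is governed by the accumulated variance $\mathrm{Var}(S_i) = \frac{\mathrm{Var}(Y)}{(1-\rho)^2}\sum_{r=1}^i c_{r,i}^2$ together with a deterministic bias of $\mathbb{E}[S_i]$ relative to $i/M$ of order $1/n$. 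The difficulty, absent in the i.i.d.\ renewal case of \cite{unknown_loc}, is that each $Y_r$ influences every later $S_i$, so the variance accumulates through the coefficients $c_{r,i} = 1-\rho^{i-r+1}$ of \eqref{c_bounds} rather than remaining local; moreover the summand is coupled to the random $M$, which I would handle either by conditioning on $M$ or by invoking the deterministic bounds on $M$. Carrying out the geometric summations $\sum_i\sum_r c_{r,i}^2$ and bounding $\mathrm{Var}(Y) \le \lambda/n^2$ (from $0 < Y \le \lambda/n$ and $\mathbb{E}[Y]=1/n$ in \eqref{distribution_bounds}) is expected to produce exactly the $O\big(\frac{1-\rho^n}{n}\big)$ behaviour, with the $\rho^n$ term emerging from the truncated geometric series and supplying the constant $C'$.
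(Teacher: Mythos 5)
Your treatment of the two easy terms is sound, and in one place sharper than the paper: the orthogonality of the exponentials for $M>2b$ does make the grid-DFT exact, so your middle term vanishes identically, whereas the paper keeps a Riemann-sum error term and bounds it by $\mathbb{E}[16b^2\pi^2/M^2]$ using \eqref{M_upper}; your cross-term cancellation likewise saves the paper's factor-of-two splittings. You also correctly reduce the signal term, via Bernstein and Cauchy--Schwarz, to bounding $\mathbb{E}\big[\frac{1}{M}\sum_{i=1}^M (S_i - i/M)^2\big]$, which is exactly the quantity the paper's Appendix A is devoted to. The problem is that your plan for this quantity -- the decisive step -- has a genuine gap. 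You propose to control $S_i - \frac{i}{M}S_M$ through $\mathrm{Var}(S_i) = \frac{\mathrm{Var}(Y)}{(1-\rho)^2}\sum_{r=1}^i c_{r,i}^2$ ``using the independence of the $Y_r$,'' and to handle the coupling with $M$ ``either by conditioning on $M$ or by invoking the deterministic bounds on $M$.'' Neither option works as stated. Conditioning on $M=m$ destroys the independence you are invoking: the event $\{M=m\}=\{S_m \le 1 < S_{m+1}\}$ is a function of all the $Y_r$, so conditionally the $Y_r$ are neither independent nor distributed according to the (already unknown) renewal law, and your unconditional variance formula for $S_i$ is no longer valid. Invoking deterministic bounds on $M$ does not rescue this either: \eqref{M_upper} gives only a one-sided (lower) bound on $M$ (since $Y>0$ has no positive lower bound, $M$ has no deterministic upper bound), the target $i/M$ and the summation range both remain random and correlated with the $Y_r$, and purely pathwise bounds such as $X_i \le \frac{\lambda}{n}\cdot\frac{1-\rho^i}{1-\rho}$ are too weak on their own, giving $|S_i - \frac{i}{M}S_M| = O(i/n)$ and hence an $O(1)$ total rather than $O(1/n)$.

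The paper's Appendix A is built precisely to circumvent this obstruction, and your sketch contains no substitute for its mechanism. The paper works with conditional moments $b(i), c(i)$ given $M=m$ without ever needing the conditional law explicitly: it observes the algebraic identity $k_m - l_m = \mathbb{E}[(S_M-1)^2 \mid M=m]$ relating the troublesome double sum $k_m$ to the single sum $l_m$, bounds the left side deterministically by $R_M^2 \le \lambda^2/(n(1-\rho))^2$, and thereby reduces everything to $l_m$, i.e.\ to $\mathbb{E}\big[\sum_{i=1}^M X_i^2\big]$, which \emph{is} controlled by pathwise bounds that survive conditioning. Even then, extracting the final rate requires bounding $\mathbb{E}[M\rho^M]$ from below via Jensen's inequality applied to $x\rho^x$ (convex only for $x > -2/\ln\rho$), and it is this convexity requirement -- not the $M>2b$ condition you cite -- that produces the paper's sampling-density threshold $n > \frac{\lambda}{1-\rho}\big(1 - \frac{2}{\ln\rho}\big)$, as well as the $-C'\rho^n$ term in the theorem (via $\mathbb{E}[M] - \mathbb{E}[M\rho^M] \le \mathbb{E}[M](1-\rho^{\mathbb{E}[M]})$). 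Your expectation that the $\rho^n$ factor will ``emerge from the truncated geometric series'' of the $c_{r,i}$ is a plausible heuristic but is not backed by a step that survives the $M$-coupling; to complete your argument you would need either the paper's identity-plus-Jensen device or some other decoupling tool (e.g.\ a stopping-time/martingale argument), neither of which appears in your proposal.
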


\begin{proof}
The proof will be coupled step approach by using a Riemann sum approximation of
the integral in \eqref{def_basic_gx} and bounding the error between the Riemann
sum and the estimate of the Fourier coefficients given in \eqref{AK_gen}. The
detailed calculations will be separately shown in the Appendices while using the
main results to maintain the flow of the proof. Define,
\begin{equation*}
\hat{A}[k] := \frac{1}{M} \sum_{i = 1}^M g(S_i)\exp \left(-\frac{j 2 \pi ki}{M}
\right) \ ; \ \hat{W}[k] := \frac{1}{M} \sum_{i = 1}^M W(S_i)\exp \left(-\frac{j
2 \pi ki}{M} \right)
\end{equation*}
Therefore, $\hat{A}_{\text{gen}}[k] = \hat{A}[k] + \hat{W}[k]$. The above terms
are signal and noise components in the estimates and they have been separated to
ease out calculations. The integral in \eqref{def_basic_gx}, defining $a[k]$,
can be approximated using an $M$-point Riemann sum as, 
\begin{align}
A_R[k] = \frac{1}{M} \sum_{i = 1}^M g\left(\frac{i}{M} \right)\exp
\left(-\frac{j 2 \pi ki}{M} \right)
\end{align}
It is important to note here that is this sum is actually random because of $M$
and thus when used to calculate the distortion, the expected value of the sum
should be taken to average over different estimates of the fields.
\begin{align}\label{main_eq}
  \mathcal{D}(\widehat{G},g) & = \mathbb{E}\left[ \big| \hat{A}_{\text{gen}}[k]
- a[k] \big|^2 \right] \nonumber \\
   & = \mathbb{E}\left[ \big| \hat{A}[k] + \hat{W}[k] - a[k] \big|^2 \right]
\nonumber \\
   & \leq 2\mathbb{E}\left[ \big| \hat{A}[k] - a[k] \big|^2 \right] +
2\mathbb{E}\left[ \big| \hat{W}[k] \big|^2 \right] \nonumber \\
   & = 2\mathbb{E}\left[ \big| \hat{A}[k] -A_R[k] + A_R[k] - a[k] \big|^2
\right] + 2\mathbb{E}\left[ \big| \hat{W}[k] \big|^2 \right] \nonumber \\
   & \leq 4\mathbb{E}\left[ \big| \hat{A}[k] -A_R[k] \big|^2 \right] +
4\mathbb{E}\left[ \big| A_R[k] - a[k] \big|^2 \right] + 2\mathbb{E}\left[ \big|
\hat{W}[k] \big|^2 \right] 
\end{align}
where the second and the fourth step follow from the inequality $ \displaystyle
\big| \sum_{i =1}^n a_i \big|^2 \leq n\sum_{i=1}^n |a_i|^2$, for any numbers
$a_1, a_2, \dots, a_n \in \mathbb{C}$. The three terms as obtained in the final
step of \eqref{main_eq} obtained will be analysed separately and solved one by
one. The three terms are considered in their specific order.
\begin{align}
\big| \hat{A}[k] - A_R[k] \big| &= \bigg| \frac{1}{M} \sum_{i = 1}^M g(S_i)\exp
\left(-\frac{j 2 \pi ki}{M} \right) - \frac{1}{M} \sum_{i = 1}^M
g\left(\frac{i}{M} \right)\exp \left(-\frac{j 2 \pi ki}{M} \right) \bigg|
\nonumber \\
& = \frac{1}{M} \bigg|  \sum_{i = 1}^M \left[ g(S_i) -  g\left(\frac{i}{M}
\right) \right] \exp \left(-\frac{j 2 \pi ki}{M} \right) \bigg| \nonumber \\
& \leq \frac{1}{M}   \sum_{i = 1}^M \bigg| \left[ g(S_i) - g\left(\frac{i}{M}
\right) \right] \exp \left(-\frac{j 2 \pi ki}{M} \right) \bigg| \nonumber \\ 
& = \frac{1}{M}   \sum_{i = 1}^M \bigg|  g(S_i) -  g\left(\frac{i}{M} \right)
\bigg| 
\end{align}
The third step follows from the triangle inequality. Squaring the expression
obtained in the above equation, 
\begin{align}\label{si_im}
\big| \hat{A}[k] - A_R[k] \big|^2 &=  \frac{1}{M^2} \bigg\{ \sum_{i = 1}^M
\bigg|  g(S_i) -  g\left(\frac{i}{M} \right) \bigg| \bigg\}^2 \nonumber \\
& \leq \frac{M}{M^2}  \sum_{i = 1}^M \bigg|  g(S_i) -  g\left(\frac{i}{M}
\right) \bigg|^2  \nonumber \\
& = \frac{1}{M}  \sum_{i = 1}^M \bigg|  g(S_i) -  g\left(\frac{i}{M} \right)
\bigg|^2  \nonumber \\
& \leq \frac{1}{M}  \sum_{i = 1}^M ||g'||_{\infty}^2 \bigg|  S_i - \frac{i}{M}
\bigg|^2
\end{align}
The second step follows from the inequality stated before, i.e., $  \left(
\sum_{i =1}^n a_i \right)^2 \leq n\sum_{i=1}^n a_i^2$, for any numbers $a_1, a_2
\dots a_n \in \mathbb{R}$. The last step uses the smoothness property of the
field $g(x)$. For any smooth field $g(x)$ over a domain $\mathcal{D} \subset
\mathbb{R}$ and any $x_1, x_2 \in \mathcal{D}$, $|g(x_1) - g(x_2)| \leq
||g'|||_{\infty} |x_1 - x_2|$. This follows from the Lagrange's mean value
theorem. To get the first term in \eqref{main_eq}, we need to take the
expectations on either side in \eqref{si_im}. Therefore,
\begin{align}
  \mathbb{E}\left[ \big| \hat{A}[k] - A_R[k] \big|^2 \right] = ||g'||_{\infty}^2
\mathbb{E} \left[ \frac{1}{M}  \sum_{i = 1}^M  \bigg|  S_i - \frac{i}{M}
\bigg|^2\right]
\end{align}
The expectation in the right hand side has been calculated in detail in Appendix
A and from the result obtained there, 
\begin{align}\label{bound_1}
\mathbb{E} \left[ \frac{1}{M}  \sum_{i = 1}^M  \bigg|  S_i -  \frac{i}{M}
\bigg|^2\right] \leq \frac{C_0 (1 - C_1 \rho^n)}{n}
\end{align}
which is independent of the distribution of the renewal process and only depends
on the support parameter $\lambda$ and $C_0, C_1$ are constants independent of
$n$. The important part here is to note that the bound is guaranteed as the
average sampling density, $n$, becomes large and more specifically if it is
atleast $\displaystyle \frac{\lambda}{1- \rho} \left(1 - \frac{2}{\ln \rho}
\right)$. Note that since $0 < \rho < 1$, therefore, $\ln \rho < 0$. Thus, the
bound is always a positive number. In fact, since $\ln x$ is an increasing
function of x, so is this threshold value an increasing function of $\rho$. It
is important to note that this is a sufficient condition and not a necessary
condition. Also, this value is relatively small for mobile sensing setups. For
examples, for $\rho = 0.9$ and $\lambda = 2$, this value is roughly about $400$.

The other two terms obtained in equation \eqref{main_eq} are exactly the same as
the ones obtained in \cite{unknown_loc} and reproducing the bounds obtained
there (Appendix B and equation (15)), we can write,
% \begin{align} \label{old} \begin{gathered} \mathbb{E}\left[ \big| A_R[k] -
% a[k] \big|^2 \right] \leq \frac{16b^2\pi^2\lambda^2}{(n(1 - \rho) -
% \lambda)^2}\\ \mathbb{E}\left[ \big| \hat{W}[k] \big|^2 \right] \leq
% \frac{\sigma^2 \lambda}{n(1 - \rho) - \lambda} \\ \end{gathered} \end{align}
%
\begin{align}\label{old}
\mathbb{E}\left[ \big| A_R[k] - a[k] \big|^2 \right] \leq \mathbb{E}\left[
\frac{16b^2\pi^2}{M^2} \right] \\
\mathbb{E}\left[ \big| \hat{W}[k] \big|^2 \right] \leq
\mathbb{E}\left[\frac{\sigma^2}{M}\right] 
%
% \implies \mathbb{E}\left[ \big| A_R[k] - a[k] \big|^2 \right] \leq
% \mathbb{E}\left[ \frac{16b^2\pi^2}{M^2} \right] \leq
% \frac{16b^2\pi^2\lambda^2}{(n(1 - \rho) - \lambda)^2} \\ \implies
% \mathbb{E}\left[ \big| \hat{W}[k] \big|^2 \right] \leq
% \mathbb{E}\left[\frac{\sigma^2}{M}\right] \leq \frac{\sigma^2 \lambda}{n(1
% - \rho) - \lambda}
%
\end{align}
Combining these results with the one obtained in equation \eqref{M_upper}, we
get
\begin{align}\label{old_final}
\mathbb{E}\left[ \big| A_R[k] - a[k] \big|^2 \right]  \leq
\frac{16b^2\pi^2\lambda^2}{(n(1 - \rho) - \lambda)^2} \\
\mathbb{E}\left[ \big| \hat{W}[k] \big|^2 \right] \leq \frac{\sigma^2
\lambda}{n(1 - \rho) - \lambda} 
\end{align}
Putting together the results obtained in \eqref{bernestien}, \eqref{main_eq},
\eqref{bound_1}, \eqref{old_final}, 
\begin{align}
\mathbb{E}\left[ \big| \hat{A}_{\text{gen}}[k] - a[k] \big|^2 \right] & \leq 4
||g'||_{\infty}^2 \frac{C_0 (1 - C_1 \rho^n)}{n} +
\frac{64b^2\pi^2\lambda^2}{(n(1 - \rho) - \lambda)^2} + \frac{2\sigma^2
\lambda}{n(1 - \rho) - \lambda} \nonumber \\ 
& \leq  4 (2b\pi)^2 \frac{C_0 (1 - C_1 \rho^n)}{n} +
\frac{64b^2\pi^2\lambda^2}{(n(1 - \rho) - \lambda)^2} + \frac{2\sigma^2
\lambda}{n(1 - \rho) - \lambda} \nonumber\\
& \leq \frac{C - C' \rho^n}{n}
\end{align}
as $n$ becomes large. $C, C'$ are positive constants independent of the sampling
density $n$. These are mainly functions of the bandwidth parameter $b$, the
support parameter $\lambda$, the `correlation coefficient' $\rho$ and the
measurement noise $\sigma^2$.

This completes the proof of the main result in the theorem.

\end{proof}

The dependence on the coefficient $\rho$ that characterizes the autoregressive
model is rather interesting and deserves special attention. On a broad scale,
due to the autoregressive model, two things have significantly changed from the
result shown in \cite{unknown_loc}. Firstly, it has resulted in a lower bound on
the average sampling density $n$, that is sufficient to ensure the bound on the
mean squared error. Even though it is not a necessary condition, the bound fails
when $n \ll \dfrac{\lambda}{1- \rho} \left(1 - \dfrac{2}{\ln \rho} \right)$.
That is, if sampling density is made too small, the aberration is visible and
the error does not go down as $1/n$. This can be seen in the simulations section
of the paper. Moreover, this threshold gets larger as $\rho$ gets closer to $1$
and in fact is unbounded and shoots off to infinity when $\rho$ is in the
neighbourhood of $1$. Another factor is the presence of the term of $1 - \rho^n$
in the numerator. This does not affect the asymptotic bounds on the error
because as $n$ becomes larger, $\rho^n$ becomes smaller since $0 \leq \rho 1$. 

Another thing that is worth noting that even though the stochastic process is
such that $n$ is expected sampling density, the {\em effective} sampling density
is lesser due to the autoregressive model. Even though $n$ has been termed as
sampling density for the ease of understanding, it is just that the sampling
density is of the order of $n$. The {\em actual} or {\em effective} sampling
density is $n(1 - \rho)$. Asymptotically, it cannot be made this is same as $n$,
however, for finitely large $n$, a clear difference is noted when $\rho$ is in
the neighbourhood of $1$. Consider the case of $n = 10000$, which is a
reasonably large number of samples and one expects almost perfect reconstruction
at this sampling density. If $\rho$ is close to $1$, say, $0.99$, the value of
{\em effective} sampling density becomes about $100$ which is not sufficient for
a good reconstruction. Thus, because of this {\em effective} sampling densities,
the error begins to decrease as $1/n$ from much larger values of $n$ when $\rho$
is close to $1$ as opposed to when it is not.

\section{Simulations}

This section presents the results of simulations. The simulations have been
presented in Fig.~\ref{Fig :Simulations}. The simulations have a large scope as
they help confirm the results obtained, help in examining the effect of the
`correlation coefficient', $\rho$ on the results and help in analysing the
effect of different renewal processes that characterize the stochastic term of
the autoregressive model on the estimation error.

Firstly, for the purpose of simulations, a field $g(x)$ with $b = 3$ is
considered and its Fourier coefficients have been generated using independent
trials of a Uniform distribution over $[-1,1]$ for all real and imaginary parts
separately. To ensure that the field is real, conjugate symmetry is employed,
i.e., $a[k] = \bar{a}[-k]$ and $a[0]$ is ensured that it is real. Finally, the
field is scaled to have $|g(x)| \leq 1$. The following coefficients were
obtained.
\begin{align}
a[0] = 0.3002; \ a[1] = -0.04131 + j0.0216; \nonumber \\
a[2] = 0.0871 + j0.0343; \ a[3] = -0.1679 - j0.0586; 
\end{align}
The distortion was estimated using Monte Carlo simulation with $10000$ trials.

\begin{figure}[!htb]
\centering
\includegraphics[width =6.0in]{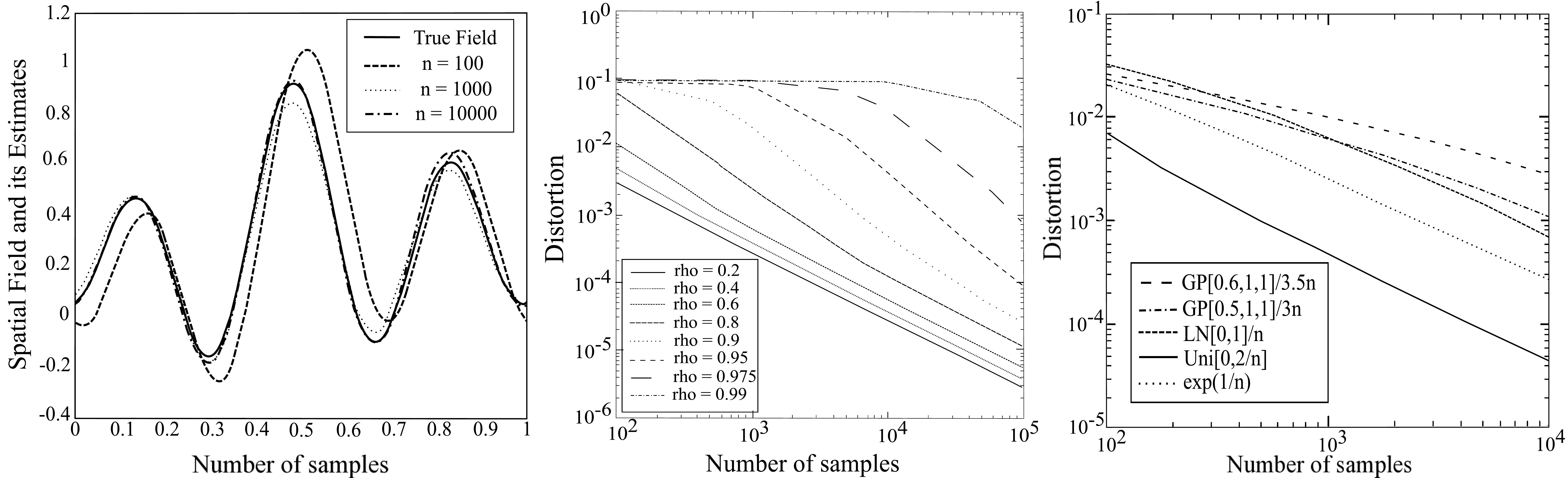}
\caption{\label{Fig :Simulations} (a) The convergence of random realizations of
$\widehat{G}(x)$ to $g(x)$ is illustrated for the field whose Fourier
coefficients have been mentioned above. The field estimate for $n = 10000$
almost converges to the true value of the field (and is not visible in the
plot). (b) The distortion scales as $O(1/n)$ with additive Gaussian Noise of
variance $0.125$ and Uniform$[0, 2/n]$ renewal process distribution. Note the
variation with $\rho$. For small values of $\rho$, a slope of $-1$ is distinctly
noted. However, as the value of $\rho$ increase, we see that the slope does not
initially follow the trend. This can be accounted for in two ways. One being
that at large values of $\rho$, the threshold is not being satisfied at these
sampling densities and the other is the {\em effective} sampling rate. Note that
how the slopes are eventually getting close to $-1$ as the sampling density is
increased. It be shown that at very large sampling densities the plot will
behave as expected.}
\end{figure} 

Field estimates were obtained in various cases using different distributions for
the renewal process and the measurement noise distributions. Uniform
distribution between $a$ and $b$ is denoted as Uniform$[a,b]$. The $\exp(\mu)$
denotes an exponential distribution with mean $1/\mu$. $\mathcal{N}(\mu,s)$
denotes a Gaussian distribution with mean $\mu$ and variance $s$. The
distribution LN$[m,s]$ represents a log-normal distribution where $m$ is the
mean and $s$ is the variance of the underlying Gaussian.

In Fig.\ref{Fig :Simulations}(a) the measurement noise was generated using
$\mathcal{N}(0,0.125)$. The autoregressive model that was employed had $\rho =
0.5$ and the stochastic part was generated using Uni$[0,2/n]$. Therefore,
$\lambda = 2$. The method for reconstruction in \eqref{AK_gen} is agnostic to
this distribution though. The figure has random realizations of the estimated
field $\hat{G}(x)$ along with the true field, $g(x)$. It can be clearly seen
that as the value of $n$ increases the estimated field gets closer to the
actual field.

The Fig.\ref{Fig :Simulations}(b) has the plots of the mean squared error
between the field and its estimate. For the plots, the noise model and
stochastic part of the autoregressive model is same as in above paragraph.
However, the plots have been plotted for different values of $\rho$ in the range
$0.2$ to $0.99$. Note that the illustrated plots are log-log plots.

The Fig.\ref{Fig :Simulations}(c) has plots for different renewal processes for
the stochastic part of the autoregressive process. For heavy-tailed, infinite
support distributions like the General Pareto distribution, the error does not
decrease as $1/n$. For the other finite support distributions, the error follows
the $O(1/n)$ decrease.

\section{Conclusion} % (fold)
\label{sec:conclusion}

The problem of sampling with unknown sampling locations obtained from an
autoregressive model on the intersample distances was studied. The field was
estimated using the noisy samples and the mean squared error between the
estimated and true field was shown to decrease as $O(1/n)$, where $n$ was the
order of the average sampling density. Also, the effect of correlated
intersample distances with different values of correlation was studied.
Simulation results, consonant with the theoretical analysis, were also
presented.

\section*{Appendix A} 

This section will mainly be elaborating the bound on $\displaystyle \mathbb{E}
\left[ \frac{1}{M}  \sum_{i = 1}^M  \bigg|  S_i -  \frac{i}{M} \bigg|^2\right]$.
To begin with consider the remainder term $R_M$, which is the distance remaining
after $M$ samples. Therefore, $R_M  = 1 - S_M$. An upper bound on this will be
established as this as this will be of use later. 
\begin{align*}
R_M = 1 - S_M < S_{M+1} - S_M = \sum_{i = 1}^{M+1} \rho^{M-i+1} Y_i \leq \sum_{i
= 0}^{\infty} \rho^{i} \frac{\lambda}{n} = \frac{\lambda}{n(1- \rho)}
\end{align*}
Thus, we have,
\begin{equation}
R_M \leq \frac{\lambda}{n(1- \rho)}
\end{equation}
The main expression that this Appendix will be dealing with is the upper bound
on the mean square error between the sampling locations and the uniform grid
locations. The expression is given by 
\begin{equation*}
\mathbb{E}\left[ \frac{1}{M} \sum_{i = 1}^M \bigg| S_i - \frac{i}{M}
\bigg|^2\right] = \mathbb{E}\left[ \frac{1}{M} \sum_{i = 1}^M \bigg| \sum_{k =
1}^i X_k - \frac{i}{M} \bigg|^2\right] = \mathbb{E}\left[ \frac{1}{M} \sum_{i =
1}^M \bigg\{\sum_{k = 1}^i  \left( X_k - \frac{1}{M} \right) \bigg\}^2\right]
\end{equation*}
To ease out the calculations, a conditional version of this expectation will be
dealt with that is, for $M = m$ which is given by $\displaystyle
\mathbb{E}\left[ \frac{1}{M} \sum_{i = 1}^M \bigg\{\sum_{k = 1}^i  \left( X_k -
\frac{1}{M} \right) \bigg\}^2 \bigg| M = m\right]$ and then later the assumption
will be relaxed for the final results. Define the following,
\begin{align}\label{bi_ci}
b(i) := \mathbb{E}\left[ \left( X_i - \frac{1}{m} \right) \bigg| M = m \right] =
\sum_{r = 1}^i \rho^{i-r} \mathbb{E}[Y_r | M = m] - \frac{1}{m} \\
c(i) := \text{Var}\left[ \left( X_i - \frac{1}{m} \right) \bigg| M = m \right] =
\text{Var}[X_i | M = m] 
\end{align}
Since we know that for a random variable $X$, $\mathbb{E}[X^2] = \text{Var}[X] +
\mathbb{E}[X]^2$. Therefore, we can write, 
\begin{equation}
\mathbb{E}\left[ \left( X_i - \frac{1}{m} \right)^2 \bigg| M = m \right]  = c(i)
+ b^2(i)
\end{equation}
Consider,
\begin{align}
\mathbb{E}[X_i X_j | M = m] & =   \mathbb{E} \left[\left( \sum_{k = 1}^i
\rho^{i-k} Y_k \right) \left( \sum_{l = 1}^j \rho^{j-l} Y_l \right) \bigg| M =
m\right]   \nonumber  \\
& = \mathbb{E} \left[\left( \sum_{k = 1}^i \rho^{i-k} \sum_{l = 1}^j \rho^{j-l}
Y_k Y_l \right) \bigg| M = m\right] \nonumber \\
& = \sum_{k = 1}^i  \sum_{l = 1}^j \rho^{i-k}\rho^{j-l} \mathbb{E}[Y_k Y_l | M =
m] \nonumber \\
& = \sum_{l = 1}^{\min\{i,j\}} \rho^{i+j-2l} \mathbb{E}[Y_l^2 | M = m]  +
2\sum_{k = 1}^i  \sum_{l = k+1}^j \rho^{i-k+j- l}\ \mathbb{E}[Y_k | M = m]
\mathbb{E}[Y_l | M = m] \nonumber \\
& =  \sum_{l = 1}^{\min\{i,j\}} \rho^{i+j-2l} \text{Var}[Y_l | M = m]  + \sum_{k
= 1}^i  \sum_{l = 1}^j \rho^{i-k + j-l} \mathbb{E}[Y_k|M = m] \mathbb{E}[Y_l | M
= m] \nonumber \\
& = \rho^{|i-j|} \sum_{l = 1}^a \rho^{2(a-l)} \text{Var}[Y_1 | M = m]+
\left(b(i) + \frac{1}{m} \right) \left(b(j) + \frac{1}{m} \right) \nonumber  \\
& = \rho^{|i-j|} c(a) + \left(b(i) + \frac{1}{m} \right) \left(b(j) +
\frac{1}{m} \right)   
\end{align}
where $a = \min \{i,j\}$ and $b(i)$ and $b(j)$ have been defined in
(\ref{bi_ci}).

\begin{flalign}
\label{xi_xj}
& \mathbb{E}\left[ \left( X_i - \frac{1}{m} \right)\left( X_j - \frac{1}{m}
\right)\bigg| M = m \right]  \nonumber \\
& = \mathbb{E}[ X_i X_j | M = m] - \frac{1}{m}(\mathbb{E}[X_i | M = m] +
\mathbb{E}[X_j | M = m])  + \frac{1}{m^2} \nonumber \\
& = \rho^{|i-j|} c(a) + \left(b(i) + \frac{1}{m} \right) \left(b(j) +
\frac{1}{m} \right) - \frac{1}{m}\left(b(i) + \frac{1}{m} + b(j) + \frac{1}{m}
\right) + \frac{1}{m^2} \nonumber \\
& = \rho^{|i-j|} c(\min\{i,j\}) + b(i)b(j)  
\end{flalign}

Define for convenience of notation, 
\begin{align}\label{tm_def}
T_m = \mathbb{E}\left[ \frac{1}{M} \sum_{i = 1}^M \bigg| S_i - \frac{i}{M}
\bigg|^2\right] \ ; \ Z_k = \left( X_k - \frac{1}{m} \right)
\end{align}
Therefore we can write, 
\begin{align*}
T_m &= \mathbb{E}\left[ \frac{1}{M} \sum_{i = 1}^M \bigg\{\sum_{k = 1}^i  \left(
X_k - \frac{1}{M} \right) \bigg\}^2 \bigg| M = m\right] = \mathbb{E}\left[
\frac{1}{m} \sum_{i = 1}^m \bigg\{\sum_{k = 1}^i  \left( X_k - \frac{1}{m}
\right) \bigg\}^2 \bigg| M = m\right]  \\
&= \mathbb{E}\left[ \frac{1}{m} \sum_{i = 1}^m \bigg\{\sum_{k = 1}^i  Z_k
\bigg\}^2 \bigg| M = m\right] = \mathbb{E}\left[ \frac{1}{m} \sum_{i = 1}^m (Z_1
+ Z_2 + \dots Z_i)^2 \bigg| M = m\right]  \\
& = \mathbb{E} \left[ \frac{1}{m} \left( mZ_1^2 + (m-1)Z_2^2 + \dots + Z_m^2 +
2\{ (m-1)Z_1Z_2 + (m-2)Z_1Z_3 + \dots + Z_1Z_m    \right. \right. \\
& \ \  \left.  + (m-2)Z_2Z_3 + (m-3) Z_2Z_4 + \dots Z_2Z_4 +\cdots + Z_{m-1}Z_m
\} \right) \bigg| M = m \bigg] \\
&  = \mathbb{E} \left[ \frac{1}{m} \left( 2\{ mZ_1^2 + (m-1)Z_2^2 + \dots +
Z_m^2 +  (m-1)Z_1Z_2 + (m-2)Z_1Z_3 + \dots + Z_1Z_m +  \right. \right. \\
& \left. \left. \ \ \ (m-2)Z_2Z_3 + (m-3)Z_2Z_4 + \cdots + Z_{m-1}Z_m \} -
\{mZ_1^2 + (m-1)Z_2^2 + \dots + Z_m^2 \} \right) \bigg| M = m \right]
\end{align*}
% 
% 
% \begin{align*}
  
% \end{align*}
% 
 % \begin{equation*} T_m = \mathbb{E} \left[ \frac{1}{m} \left( 2\{ mZ_1^2 +
 % (m-1)Z_2^2 + \dots + Z_m^2 +  (m-1)Z_1Z_2 + (m-2)Z_1Z_3 + \dots + Z_1Z_m +
 % \\(m-2)Z_2Z_3 + (m-3)Z_2Z_4 + \cdots Z_2Z_m +\cdots + Z_{m-1}Z_m \} -
 % \{mZ_1^2 + (m-1)Z_2^2 + \dots + Z_m^2 \} \right) \bigg| M = m \right]
 % \end{equation*}
% 
The similar terms can be grouped together to form a simpler expression like,
\begin{align*}
T_m & = \mathbb{E} \left[ \frac{1}{m} \left( 2\{ mZ_1^2 + (m-1)Z_1Z_2 +
(m-2)Z_1Z_3 + \dots + Z_1Z_m + \right. \right.\\ 
& \ \ \ \ \ \ \ \ \ \ (m-1)Z_2^2 + (m-2)Z_2Z_3 + (m-3)Z_2Z_4 + \dots Z_2Z_m +
\dots \\ 
& \ \ \ \ \ \ \ \ \ \left. \left. + 2Z_{m-1}^2  +\cdots + Z_{m-1}Z_m + Z_m^2\} -
\{mZ_1^2 + (m-1)Z_2^2 + \dots + Z_m^2 \} \right) \bigg| M = m \right]
 \end{align*}
Using the expressions from (\ref{bi_ci}) and (\ref{xi_xj}) and noting that $Z_i
= X_i - \frac{1}{m}$, the above expression can be rewritten as,

\begin{align*}
T_m = \frac{2}{m} \bigg\{ \sum_{r = 0}^{m-1} [(m - r) (\rho^r c(1) +
b(1)b(r+1))] + \sum_{r = 0}^{m-1} [(m - 1 - r) (\rho^r c(2) + b(2)b(r+2))] +
\dots \bigg\} \\ - \frac{1}{m} \bigg\{ m[c(1) + b^2(1)] + (m-1)[c(2) + b^2(2)] +
\dots \bigg\}
\end{align*}
This can be condensed to be written as,
\begin{align}
\label{tm_final}
T_m = \frac{2}{m} \sum_{i=1}^m \sum_{r = 0}^{m-i} [(m - r -i + 1) (\rho^r c(i) +
b(i)b(r+i))] - \frac{1}{m} \sum_{i=1}^m (m-i+1)[c(i) + b^2(i)] 
\end{align}
Inspired by the above equation \eqref{tm_final}, define the following,  
\begin{align}\label{kl_def}
k_m := 2\sum_{i=1}^m \sum_{r = 0}^{m-i} \rho^r c(i) + b(i)b(r+i) \nonumber \\
l_m := \sum_{i=1}^m c(i) + b^2(i) =  \sum_{i=1}^m \mathbb{E}\left[ \left( X_i -
\frac{1}{m} \right)^2 \bigg| M = m \right] 
\end{align}
The last equality follows from (\ref{bi_ci}). The following inequalties are
noted. 
\begin{flalign}\label{kl_ineq}
& 2\sum_{i=1}^m \sum_{r = 0}^{m-i} [(m - r -i + 1) (\rho^r c(i) + b(i)b(r+i))]
\leq 2\sum_{i=1}^m \sum_{r = 0}^{m-i} m (\rho^r c(i) + b(i)b(r+i)) = mk_m
\nonumber \\
&  \ \ \ \ \ \ \ \ \ \ \quad \quad \ \ \ \ \ \ \ \ \ \ \ \sum_{i=1}^m
(m-i+1)[c(i) + b^2(i)] \geq \sum_{i=1}^m c(i) + b^2(i) = l_m 
\end{flalign}

Combining the expressions obtained in equations (\ref{tm_final}), (\ref{kl_def})
and (\ref{kl_ineq}), one can conclude $T_m \leq \frac{1}{m}(mk_m - l_m)$.  Again
consider the remainder term, $R_M = 1 - S_M$. Squaring both sides and taking
expectations, we get,
\begin{align}\label{sm_rm}
\mathbb{E}[(S_M - 1)^2 | M = m] = \mathbb{E}[R_M^2 | M = m]
\end{align}
For the left hand side term in the above equation,
\begin{align}
\label{sm_upper}
\mathbb{E}[(S_M - 1)^2 | M = m] & = \mathbb{E}\left[  \bigg\{\sum_{k = 1}^m
\left( X_k - \frac{1}{m} \right) \bigg\}^2 \bigg| M = m\right] \nonumber\\
& = \mathbb{E}\left[\left(\sum_{k = 1}^m Z_k\right)^2 \bigg| M = m\right]
\nonumber \\
& = \mathbb{E}\left[ Z_1^2 + Z_2^2 + \dots +Z_m^2 + 2\{ Z_1Z_2 + Z_1Z_3 + \dots
+ Z_{m-1}Z_m\} \bigg| M = m\right]\nonumber \\
& = \mathbb{E}\left[ 2\{Z_1^2 + Z_1Z_2 + \dots + Z_{m-1}Z_m\} - \{Z_1^2 + Z_2^2
\dots + Z_m^2\}\bigg| M = m\right] \nonumber\\
& = 2\sum_{i=1}^m \sum_{r = 0}^{m-i} (\rho^r c(i) + b(i)b(r+i)) - \sum_{i=1}^m
(c(i) + b^2(i)) \nonumber \\
& = k_m - l_m
\end{align}
For the term on the right hand side in \eqref{sm_rm}, 
\begin{align}\label{rm_upper}
R_M \leq \frac{\lambda}{n(1- \rho)} \implies \mathbb{E}[R_M^2 | M = m] \leq
\frac{\lambda^2}{n^2(1- \rho)^2}
\end{align}
The equations \eqref{sm_rm}, \eqref{sm_upper} and \eqref{rm_upper} can be
combined to give, 
\begin{align}
k_m - l_m \leq \frac{\lambda^2}{n^2(1- \rho)^2} \implies k_m \leq
\frac{\lambda^2}{n^2(1- \rho)^2} + l_m
\end{align}
Using the fact that $T_m \leq \frac{1}{m}(mk_m - l_m)$ and above inequality, the
upper bound on $T_m$ now becomes,
\begin{align}\label{tm_lm}
T_m  \leq \frac{1}{m}(mk_m - l_m) = k_m - \frac{l_m}{m} \leq
\frac{\lambda^2}{n^2(1- \rho)^2} + l_m - \frac{l_m}{m} = \frac{\lambda^2}{n^2(1-
\rho)^2} + l_m\left(1 - \frac{1}{m}\right)
\end{align}

Finally, taking off the conditional expectation and using expressions from
\eqref{tm_def}, \eqref{kl_def} and \eqref{tm_lm}, one can write, 
\begin{align}\label{second_last}
\mathbb{E}\left[ \frac{1}{M} \sum_{i = 1}^M \bigg| S_i - \frac{i}{M}
\bigg|^2\right] & \leq \frac{\lambda^2}{n^2(1- \rho)^2} + \mathbb{E}\left[
\left(\frac{M-1}{M}\right) \bigg\{ \sum_{i=1}^M  \left( X_i - \frac{1}{M}
\right)^2 \bigg\} \right] \nonumber\\ 
& \overset{\text{(a)}}{\leq} \frac{\lambda^2}{n^2(1- \rho)^2} + \mathbb{E}\left[
\bigg\{ \sum_{i=1}^M  \left( X_i - \frac{1}{M} \right)^2 \bigg\} \right]
\nonumber \\
%
%(M-1) \bigg\{\frac{(Y_1 - \mathbb{E}[Y_1])^2}{1 - \rho^2} + 2\left(
%\frac{Y_1^2}{(1 - \rho)^2} + \frac{1}{M^2} \right)\bigg\} \right] \\
%
& \overset{\text{(b)}}{\leq} \frac{\lambda^2}{n^2(1- \rho)^2} + \mathbb{E}\left[
2 \sum_{i=1}^M  \left( X_i^2 + \frac{1}{M^2} \right) \right] \nonumber\\ 
%
%\frac{\mathbb{E}[M(Y_1 - \mathbb{E}[Y_1])^2]}{1 - \rho^2} +
%2\left(\frac{\mathbb{E}[M Y_1^2]}{(1 - \rho)^2} +
%\mathbb{E}\left[\frac{1}{M}\right]  \right)
%
& \overset{\text{(c)}}{=} \frac{\lambda^2}{n^2(1- \rho)^2} + \mathbb{E}\left[  2
\sum_{i=1}^M  X_i^2 \right] + \mathbb{E}\left[\frac{2}{M} \right] \nonumber\\
& \overset{\text{(d)}}{\leq} \frac{\lambda^2}{n^2(1- \rho)^2} + 2
\mathbb{E}\left[ \sum_{i=1}^M  X_i^2 \right] + \frac{2\lambda}{n(1 - \rho) -
\lambda}
\end{align}
where (a) follows from the fact that $\dfrac{M -1}{M} < 1$, (b) is a direct
application of Cauchy Schwarz, (c) follows from linearity of expectation and (d)
uses the following result obtained from \eqref{M_upper} which can be restated as
$\dfrac{1}{M} < \dfrac{\lambda}{n(1 - \rho) - \lambda}$. Hence, 
\begin{equation*}
\mathbb{E}\left[\frac{1}{M} \right] < \frac{\lambda}{n(1 - \rho) - \lambda}
\end{equation*}
The proof will be complete if the term $\displaystyle \mathbb{E}\left[
\sum_{i=1}^M  X_i^2 \right]$ is upper bounded by a term of order $1/n$. For that
purpose, consider
\begin{equation*}
X_i = \sum_{r = 1}^i \rho^{i-r} Y_r \leq \sum_{r = 1}^i \rho^{i-r}
\frac{\lambda}{n} = \frac{\lambda}{n} \left(\frac{1 - \rho^i}{1- \rho}\right) 
\end{equation*}
This implies, $\displaystyle  X_i^2 \leq \frac{\lambda^2}{n^2} \left(\frac{1 -
\rho^i}{1- \rho}\right)^2 $. Taking expectations,
\begin{align}
\mathbb{E}\left[ \sum_{i=1}^M  X_i^2 \right] \leq  \mathbb{E}\left[ \sum_{i=1}^M
\frac{\lambda^2}{n^2} \left(\frac{1 - \rho^i}{1- \rho}\right)^2 \right] =
\frac{\lambda^2}{(n(1 - \rho))^2}  \mathbb{E}\left[ \sum_{i=1}^M (1 - \rho^i)^2
\right]
\end{align}
Since, $0 < \rho < 1$, therefore, $(1 - \rho^i)^2 < (1 - \rho^i) \ \forall \ i
\geq 1 $ and $\rho^i \geq \rho^j \ \forall \ j \geq i$. Hence, 
\begin{align}\label{exp_xi2}
\mathbb{E}\left[ \sum_{i=1}^M  X_i^2 \right] & \leq \frac{\lambda^2}{(n(1 -
\rho))^2}  \mathbb{E}\left[ \sum_{i=1}^M (1 - \rho^i)^2 \right] \nonumber\\
& \leq \frac{\lambda^2}{(n(1 - \rho))^2} \mathbb{E}\left[ \sum_{i=1}^M (1 -
\rho^i) \right] = \frac{\lambda^2}{(n(1 - \rho))^2} \left( \mathbb{E}[M] -
\mathbb{E}\left[ \sum_{i=1}^M \rho^i \right] \right) \nonumber \\
& \leq \frac{\lambda^2}{(n(1 - \rho))^2} \left( \mathbb{E}[M] - \mathbb{E}\left[
\sum_{i=1}^M \rho^M \right] \right) = \frac{\lambda^2}{(n(1 - \rho))^2} \left(
\mathbb{E}[M] -  \mathbb{E}\left[ M \rho^M \right] \right) 
\end{align}
To establish bounds on $\mathbb{E}\left[ M \rho^M \right]$, consider the
function $f: \mathbb{R} \rightarrow \mathbb{R}$, given by $f(x) = x\rho^x$,
where $0< \rho < 1$ is a finite constant independent of $x$. The function is
differentiable and its second derivative is given by $f^{''}(x) = \ln \rho ( 2 +
x\ln \rho ) \rho^x$. This is positive as long as $x > -\frac{2}{\ln \rho}$. Thus
the in region defined by $x > -\frac{2}{\ln \rho}$, the function is convex.
Since the function is convex in that interval, we can apply the Jensen's
inequality. Jensen's inequality states for any function $h(x)$ that is convex
over an interval $\mathcal{I} \subset \mathcal{D}$ where $\mathcal{D}$ is the
domain of $h(x)$, and a random variable $X$ whose support is a subset of
$\mathcal{I}$, the inequality $\mathbb{E}[h(X)] \geq h(\mathbb{E}[X])$ holds
true.

We know that $f$ is convex for $x > -\frac{2}{\ln \rho}$ and from
\eqref{M_upper}, we know that $M > \frac{n (1- \rho)}{\lambda} - 1$. Therefore
if $M > -\frac{2}{\ln \rho}$, then using Jensen's inequality, we can say that $
\mathbb{E}\left[ M \rho^M \right] \geq \mathbb{E}[M] \rho^{\mathbb{E}[M]}$. The
condition on $M$ will be always true if $\displaystyle \frac{n (1-
\rho)}{\lambda} - 1 > -\frac{2}{\ln \rho}$. That is if
\begin{align}\label{n_bound}
n > \frac{\lambda}{1- \rho} \left(1 - \frac{2}{\ln \rho} \right).
\end{align}
Therefore, if $n$ is large enough, we can write from \eqref{exp_xi2},
\begin{align}\label{last_bound}
\mathbb{E}\left[ \sum_{i=1}^M  X_i^2 \right] & \leq \frac{\lambda^2}{(n(1 -
\rho))^2} \left( \mathbb{E}[M] -  \mathbb{E}\left[ M \rho^M \right] \right)
\nonumber \\
& \leq \frac{\lambda^2}{(n(1 - \rho))^2} \left(\mathbb{E}[M] -  \mathbb{E}[M]
\rho^{\mathbb{E}[M]} \right) = \frac{\lambda^2}{(n(1 - \rho))^2}
\left(\mathbb{E}[M] ( 1 - \rho^{\mathbb{E}[M]} )  \right)\nonumber \\
& \overset{\text{(a)}}{\leq}  \frac{\lambda^2}{(n(1 - \rho))^2} \left( n +
\frac{(1- \rho)}{\lambda} \right) \left( 1 - \rho^{n + \frac{ 1}{\lambda}}
\right) 
\end{align}
where (a) follows from \eqref{EM_bounds} and decreasing nature of $\rho^x$. From
\eqref{EM_bounds}, we have $\mathbb{E}[M] \leq n + \frac{(1- \rho)}{\lambda} - 1
\leq n + \frac{(1- \rho)}{\lambda}$ and thus $1 - \rho^{\mathbb{E}[M]}  \leq 1 -
\rho^{n + \frac{(1- \rho)}{\lambda}} \leq 1 - \rho^{n + \frac{1}{\lambda}}$.

Using \eqref{second_last}, \eqref{exp_xi2}, \eqref{n_bound} and
\eqref{last_bound}, we can write
\begin{align}\label{final_bound_c0_c1}
\mathbb{E}\left[ \frac{1}{M} \sum_{i = 1}^M \bigg| S_i - \frac{i}{M}
\bigg|^2\right]  & \leq \frac{\lambda^2}{n^2(1- \rho)^2} + 2
\frac{\lambda^2}{(n(1 - \rho))^2} \left( n + \frac{(1- \rho)}{\lambda} \right)
\left( 1 - \rho^{n + \frac{ 1}{\lambda}} \right)  + \frac{2\lambda}{n(1 - \rho)
- \lambda} \nonumber \\
& \leq \frac{2\lambda^2 ( 1 - \rho^{n + \frac{1}{\lambda}})}{(1-
\rho)^2}\frac{1}{n} + \frac{2\lambda}{n(1 - \rho) - \lambda} +
\frac{\lambda^2}{n^2(1- \rho)^2}\left(1 + \frac{2}{\lambda} \right) \nonumber\\
& \leq \frac{C_0 (1 - C_1 \rho^n)}{n}
\end{align}
for some positive constants $C_0, C_1$ independent of $n$. This proves the upper
bound. It is very essential to note that this bound holds surely under the
condition that $\displaystyle n > \frac{\lambda}{1- \rho} \left(1 - \frac{2}{\ln
\rho} \right)$. Also, that is condition is a sufficient one, but not necessary.
This completes the proof.

\nocite{*}
\bibliographystyle{IEEE}
%%%%%\bibliography{bib-file}  % commented if *.bbl file included, as seen below

%%%%%%%%%%%%%%%%% BIBLIOGRAPHY IN THE LaTeX file !!!!! %%%%%%%%%%%%%%%%%%%%%%%%
%% This is nothing else than the IEEEsample.bbl file that you would          %%
%% obtain with BibTeX: you do not need to send around the *.bbl file         %%
%%---------------------------------------------------------------------------%%
%

\end{document}